\newtheorem{theorem}{Theorem}
\let\emptyset\varnothing
\newcommand\numberthis{\addtocounter{equation}{1}\tag{\theequation}}
\newtheorem{lemma}{Lemma}
\theoremstyle{definition}
\newtheorem{definition}{Definition}
\newtheorem{assumption}{Assumption}
\def\dh{\hat{d}}
\def\cF{{\cal F}}
\def\cI{{\cal I}}
\def\cN{{\cal N}}
\def\z1{z^{-1}}
\def\BibTeX{{\rm B\kern-.05em{\sc i\kern-.025em b}\kern-.08em
    T\kern-.1667em\lower.7ex\hbox{E}\kern-.125emX}}
\begin{document}

\title{Monotonic Filtering for Distributed  Collection \\
\thanks{Zainab and Dasgupta are with the Department of Electrical and Computer Engineering, University of Iowa,
Iowa City, IA, USA,
hunza-zainab@uiowa.edu, soura-dasgupta@uiowa.edu. Audrito is with the Computer Science Department and C3S, University of Torino,
Torino, Italy,
giorgio.audrito@unito.it. Beal  is with Raytheon BBN Technologies,
Cambridge, MA, USA, jakebeal@ieee.org
Supported by the Defense Advanced Research Projects Agency (DARPA) under Contract No. HR001117C0049.
		The views, opinions, and/or findings expressed are those of the author(s) and should not be interpreted as representing the official views or policies of the Department of Defense or the U.S. Government.
		This document does not contain technology or technical data controlled under either U.S. International Traffic in Arms Regulation or U.S. Export Administration Regulations.
		Approved for public release, distribution unlimited (DARPA DISTAR case 34494, April 19, 2021).
		Dasgupta also has an appointment with the Shandong Adademy of Sciences, China.}
}

\author{Hunza Zainab, Giorgio Audrito,
 Soura Dasgupta and Jacob Beal}

\maketitle

\begin{abstract}
Distributed data collection is a fundamental task in open systems. In such networks, data is aggregated across a network to produce a single aggregated result at a source device. Though self-stabilizing, algorithms performing data collection can produce large overestimates in the transient phase. For example, in \cite{zainab:ecas} we demonstrated that in a line graph, a switch of sources after initial stabilization may produce overestimates that are quadratic in the network diameter. We also proposed monotonic filtering as a strategy for removing such large overestimates. Monotonic filtering prevents the transfer of data from device $A$ to device $B$ unless the distance estimate at $A$ is more than that at $B$ at the previous iteration. For a line graph, \cite{zainab:ecas} shows that monotonic filtering prevents quadratic overestimates. This paper analyzes monotonic filtering for an arbitrary graph topology, showing that for an $N$ device network, the largest overestimate after switching sources is at most $2N$.
\end{abstract}

\begin{IEEEkeywords}
	edge computing, data aggregation, self-stabilization
\end{IEEEkeywords}

\section{Introduction}

This paper proposes and analyzes a strategy called \emph{monotonic filtering}, for removing large overestimates in distributed data collection.
Recent years have witnessed a proliferation of complex networked open systems comprising a plethora of heterogeneous devices like drones, smartphones, IoT devices, and robots.
These systems  mandate the formulation of new strategies
 of collective adaptation, with the ultimate goal of transforming these environments into  \emph{pervasive computing fabric}. In these conditions, sensing, actuation, and computation are  resilient and distributed across  space \cite{bicocchi:selforganisation}.
The focus of this paper is on resilient \emph{distributed sensing}, which could be of physical environmental properties or of digital or virtual characteristics of computing resources.
By  cooperation between physically proximate, interacting sets of mobile entities, distributed sensing can support complex situation recognition \cite{coutaz:contextkey}, monitoring \cite{bicocchi:selforganisation}, and observation and control of swarms of agents \cite{viroli:aggregate:plans}.

A defining coordination task in distributed sensing is data summarization from devices in a region.
From this, one can perform many other operations like count, integrate, average, and maximize. Data summarization is like the \emph{reduce} phase of MapReduce \cite{dean:mapreduce}. It is extended to agents communicating through their neighbors and spread across a region, e.g., in wireless sensor networks \cite{talele:aggregation:survey}.
A common implementation of data summarization is by \emph{distributed collection}, where information moves towards collector devices and aggregates \emph{en route} to produce a unique result.
Such self-organizing behavior (referred to as a ``C''  block in~\cite{vabdp:selfstabilisation}), is a fundamental  and widely used component of collective adaptive systems (CASs). It can be instantiated for values of any data type with a commutative and associative aggregation operator, and can be applied in many diverse contexts.

Several papers have characterized the  dynamics of data summarization algorithms \cite{vabdp:selfstabilisation,mo:collection:error} and on  improving such dynamics \cite{ab:alp4iot:collection,abdv:aamas:collection,abdv:coord:collection}.
These papers all show that, though
 self-stabilizing, these algorithms can give rise to large transient overestimates with potentially negative consequences.
For example, if the goal is collect the net resources in a network of devices, then  overestimates, however fleeting, may cause a leader to commit to more tasks than is feasible.

For the example of a line graph, we showed in \cite{zainab:ecas} that collection can give overestimates that are quadratic in the network diameter. This is observed in face of a particular source switch after stabilization has occurred. We presented the notion of monotonic filtering as a potential amelioration. This technique prevents collection across devices whose distance towards a source or collector device is decreasing. With line graphs, \cite{zainab:ecas} showed that monotonic filtering prevents quadratic overestimates. In this paper, we analyze monotonic filtering for  general graphs  and the demonstrate that there are no quadratic overestimates in collection during the transient phase following a source switch. Rather, for an $ N $-device network the largest overestimate is at most $ 2N $.  

Section \ref{sprel} gives preliminaries and Section \ref{sdist} provides results on distance estimates that are used to execute monononic filtering. Section \ref{smon} defines monotonic filtering.
Section \ref{sanalysis},  proves the main result. Section \ref{ssim} gives simulations.  Section \ref{sconc} concludes.

\
\section{Preliminaries} \label{sprel}
We model a network of devices  as an undirected graph $ G=(V,E) $, with $ V=\{0,\cdots, N-1\} $ being the set of devices that are the nodes in the graph and $ E $ being the edge set of connections between devices. We assume that device $ i\in V $ carries the value $ v_i $, and that there is a designated {\em source} set of nodes $ S(t)\subset V $. The goal is to aggregate the accumulation of   the values $ v_i $ at the source set, i.e., generate accumulates $ a_i $ such that
\[ \sum_{i\in S}a_i=\sum_{i\in V}v_i.
 \]
 In particular, if there is only one source, then its accumulate should be the sum of all of the $ v_i $.
This is a special case of the data collection block (C-block) of \cite{vabdp:selfstabilisation}.

\subsection{The Basic Approach}\label{sbasic}
There are many ways to achieve this objective depending on the circumstances \cite{liu:pruteanu:dulman:gradient,adv:bisgradient,abdv:aamas:collection,abdv:coord:collection}. In this paper, we  adaptively determine a spanning tree and accumulate values from children to parents. The spanning tree  is  determined by the \emph{Adaptive Bellman-Ford} (ABF) algorithm \cite{vabdp:selfstabilisation,mo:abf:stability} a special case of what is called a $ G $-block in
\cite{vabdp:selfstabilisation}; ABF estimates distances. Two nodes are neighbors if they share an edge. Define  edge length between any two neighbors to be 1 and $ \cN(i) $ to be the set of neighbors of $ i $.
 Then  with $ \dh_i(t) $ the distance and $ S $ the source set, ABF proceeds as
\begin{align}\label{eq:distance:estimates}
	\dh_i(t) &=
	\begin{cases}
		0 & \textit{if } i \in S(t) \\
	\min_{j\in \cN(i)}\{	\dh_j(t-1)+1\} & \mbox{otherwise}
	\end{cases}.
\end{align}
 The minimizing $ j $ in the second bullet is called a current constraining or simply, constraining node, of $ i $. More precisely the constraining node $ c_i(t) $ obeys:
 \begin{align}
 c_i(t) &= \label{eq:constraining:nodes:initial}
 \begin{cases}
 i & \textit{if } i \in S(t) \\
 \underset{j\in \cN(i)}{\mbox{argmin}} \{ \dh_j(t-1)+e_{ji} \} \quad~~\, & \mbox{otherwise} \\
 \end{cases}.
 \end{align}

 These constraining nodes  set up the spanning tree: The set of children of $ i $ are the nodes they constrain:
 \begin{align*}\numberthis{}\label{eq:set:of:constraining:nodes}
 C_i(t)=\{j |~ c_j(t-1)=i\}.
 \end{align*}
 Then in keeping with the strategy outlined at the beginning  of this section, we can update the accumulate at node $ i  $ through the recursion:
 \begin{align*}\numberthis{}\label{eq:accumulated:values}
 a_i(t)=\sum_{j\in C_i(t)}a_j(t-1)+v_i.
 \end{align*}

\subsection{Quadratic Overestimates}\label{squad}
\begin{figure}[htb]
	\centering
	\includegraphics[width=\linewidth]{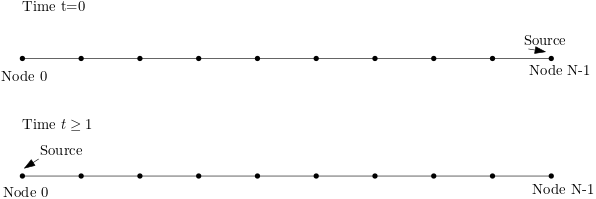}
	\caption{Representation of an $N$-node line graph ($N$-line) with a source switch from time $t=0$ to $t=1$.}\label{fig:oline:graph}
\end{figure}
From \cite{mo:abf:stability}, one knows that ABF is self-stabilizing. This means in particular that both $ c_i(t) $ and $ C_i(t) $ must acquire steady state values and thus the recursion (\ref{eq:accumulated:values}) must also converge. For example, consider the line graph in Figure \ref{fig:oline:graph}. Suppose the source is the rightmost node with index $ N-1 $, and the nodes to the left of 0 are indexed in sequence as $ 1,\cdots, N-1. $
Assume that
\begin{equation}\label{eijvi}
 v_i=1 ~\forall ~i\in V,
\end{equation}
i.e., all values are 1 and distances are hop lengths.
Then at steady state
one has
\begin{equation}\label{steady}
\dh_i(t)=N-1-i  \mbox{ and } a_i=i+1.
\end{equation}
Now suppose these values have been acquired at $ t=0 $, but the source switches from $ N-1 $ to zero at $ t=1 $. Then the following theorem from \cite{zainab:ecas} shows that \emph{en route} to self-stabilization $ a_i(t) $ suffer from quadratic overestimates in the transient phase.

\begin{theorem}\label{the:one}
	 Consider the line graph in Figure \ref{fig:oline:graph}, with (\ref{eijvi}) in force. Suppose at $ t=0 $ $ a_i(0) $ and $ \dh_i(0) $ are as in (\ref{steady}). Suppose for all $ t\geq 1 $, $ S(t)=\{0\} $. Then under (\ref{eq:distance:estimates}) and (\ref{eq:accumulated:values}), the maximum partial accumulate $a_i(t)$ reached by the source is obtained at time $t=2N-2$ and is:
	\begin{equation} \notag
	a_0(2N-2) = \left\lceil \frac{N-1}{2} \right\rceil N + N - 1 \ge \frac{N(N+1)}{2} - 1
	\end{equation}
	before reaching the correct value at time $t = 2N-1$ i.e.,
	\begin{equation}
	a_0(2N-1)=N
	\end{equation}
\end{theorem}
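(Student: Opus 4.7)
My plan is to directly compute the evolution of $\hat{d}_i(t)$, $c_i(t)$, $C_i(t)$, and $a_i(t)$ for $0 \leq t \leq 2N-1$, exploiting the one-dimensional topology of the line. The argument splits naturally into two phases around $t = N-1$, after which the ABF distances have finished re-stabilizing.

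In Phase I ($1 \leq t \leq N-1$) I would prove by induction on $t$ the invariant that $\hat{d}_i(t) = i$ for all $i \leq t-1$ (the correct distance from the new source has reached them), while for $i \geq t$ the estimate still carries a propagating remnant of the old gradient. A concurrent induction on the constraining nodes shows that $c_i(t) = i-1$ for $i \leq t-1$ (pointing leftward, toward the new source) and $c_i(t) = i+1$ for $t \leq i \leq N-2$ (still pointing rightward); at the advancing frontier the two candidates in (\ref{eq:constraining:nodes:initial}) are tied, and the tie-break that sends the frontier's constraining edge rightward is precisely what produces the overestimate. Because the old-source mass is not drained leftward at every step but is repeatedly reclaimed by node $N-1$ via its rightward-pointing neighbor, a two-step recurrence yields $a_{N-1}(t) - a_{N-1}(t-2) = N$ for the relevant $t$, and hence $a_{N-1}(N-1) = \lceil (N-1)/2 \rceil N$.

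In Phase II ($N-1 \leq t \leq 2N-1$) the tree is uniformly leftward ($c_i(t) = i-1$ for $i \geq 1$, $c_0(t) = 0$), so every non-source update reduces to $a_i(t) = a_{i+1}(t-1) + 1$. The trapped mass at node $N-1$ therefore travels leftward one hop per step, gaining $v_i = 1$ each time. After $N-1$ hops it arrives at node $0$ at time $t = (N-1) + (N-1) = 2N-2$ with value $\lceil (N-1)/2 \rceil N + (N-1)$, which is the theorem's claimed peak; the bound $\lceil (N-1)/2 \rceil N + N - 1 \geq N(N+1)/2 - 1$ is a routine arithmetic check. One further step flushes the now-drained chain: since $a_1(2N-2) = N-1$, we get $a_0(2N-1) = a_1(2N-2) + 1 = N$, the correct total.

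The main obstacle is Phase I: the reclaim mechanism that builds the large accumulate on the right depends on careful bookkeeping of the frontier's interaction with the stale gradient, and the parity of $N$ is exactly what determines the $\lceil (N-1)/2 \rceil$ coefficient. The cleanest way to manage this is to track the pair $(a_{N-1}(t), a_{N-1}(t-1))$ through a two-step recurrence rather than unfolding the dynamics step-by-step for each iteration, and to verify the joint invariants by explicit enumeration for small $N$ (e.g.\ $N = 5$ and $N = 6$) before formalizing the general induction.
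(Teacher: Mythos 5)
The paper itself does not prove Theorem~\ref{the:one}: it is quoted from \cite{zainab:ecas}, so there is no in-paper proof to compare against, and your proposal has to be judged on its own. On that basis your route is sound and its waypoints are correct: tracing the dynamics for $N=3,\dots,6$ confirms that during the transient the frontier's rightward-pointing tie causes the old-source region to repeatedly reclaim the arriving mass, that $a_{N-1}(N-1)=\lceil (N-1)/2\rceil N$ (with the two-step increment $a_{N-1}(t)-a_{N-1}(t-2)=N$ holding for $3\le t\le N-1$), that once the parent pointers are uniformly leftward the trapped value moves one hop left per step gaining $1$ each time, and that this yields $a_0(2N-2)=\lceil (N-1)/2\rceil N+N-1$ followed by $a_0(2N-1)=N$; the final inequality is, as you say, routine.

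Several points must be made explicit before the plan is a proof. First, the claimed constant depends on how ties in the argmin of (\ref{eq:constraining:nodes:initial}) are resolved: if the frontier node's tie is broken toward the new source the bounce never occurs (for $N=4$ one then gets a peak of $9$ at $t=4$ and $a_0(2N-2)=N$), so the convention that a tied node keeps its previous, old-source-facing parent must be stated as a hypothesis and used in the induction, not just alluded to. Second, the literal reading of (\ref{eq:set:of:constraining:nodes}) makes a source its own child (since $c_i(t)=i$ for sources); this self-term has to be excluded, otherwise even the steady state $a_0=N$ fails, and your computations implicitly assume the exclusion. Third, your phase boundary is slightly misstated: for even $N$ the estimate $\hat{d}_{N-1}$ only converges at $t=N$, and the accumulate update at $t=N-1$ still uses $c(N-2)$, which retains one rightward edge at the end of the line; what is true, and all that Phase II needs, is that $c_i(t)=i-1$ for every $i\ge 1$ and $t\ge N-1$, so state the invariant in terms of parents rather than distances. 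Finally, the theorem asserts that $t=2N-2$ gives the \emph{maximum} accumulate at the source, so you must also bound $a_0(t)$ for all other $t$ (it stays small through Phase I and the Phase II arrivals at node $0$ increase up to the peak); your induction supplies this, but the claim should be drawn out explicitly rather than left implicit.
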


\subsection{Monotonic Filtering}\label{smon}
Monotonic filtering was proposed in \cite{zainab:ecas} as a remedy for such quadratic overestimation. Specifically,  this device  permits only a subset of the nodes that $ i $ constrains to be a valid children: only those $ c_i(t) $ whose distance estimate in the previous iteration was one more than $ \dh_i(t) $ are allowed to be children. Thus the set of children in (\ref{eq:set:of:constraining:nodes}) is replaced by
\begin{align}
C_i(t) = \{j| i=c_j(t-1) \wedge   \hat{d}_j(t-1)
=\hat{d}_i(t)+1\}.\label{eq:improved:C}
\end{align}
Neither the definition of constraining nodes, nor the underlying accumulation equation changes. The latter in particular remains as (\ref{eq:accumulated:values}). It should be noted that the self-stabilizing nature of ABF, per \cite{mo:abf:stability}, ensures that all distance estimates converge to their correct values. In such a steady state, under (\ref{eijvi}), every   node that constrains another automatically satisfies the  restriction on distance estimates given in (\ref{eq:improved:C}).
As proved in \cite{zainab:ecas}, this additional restriction is all that is needed to remove any overestimate from happening in a line graph. In this paper we show that this amelioration persists for general undirected graphs. The next subsection describes the analytical framework.


\subsection{Definitions and Assumptions}\label{ssetting}
We now make some definitions to set up the assumptions that underlie our analysis of monotonic filtering. Suppose $ d_i $ is the hop count of $ i $ from the source set $ S $. Then from Bellman's Principle of Optimality it obeys
\begin{equation}\label{eq:true:dist:2}
\begin{cases}
d_i= 0 & i\in S\\
\min_{j\in \cN(i)} d_j+1 & \mbox{ otherwise }
\end{cases}.
\end{equation}
This leads to the definition of a \emph{true constraining node.}

\begin{definition}[True constraining node]\label{def:constraining:nodes}
	A $k$ that minimizes the right hand side of (\ref{eq:true:dist:2}) is
	a true constraining node of $i$. As there may be two
	neighbors  $j$ and $k$ of $ i $ such that $d_j =d_k$, a node may have
	multiple true constraining nodes while the true constraining node of a source is itself.
\end{definition}
We now define the notion of {\em effective diameter} introduced in \cite{mo:abf:stability}.
\begin{definition}[Effective Diameter]\label{def:def1} Consider a sequence of nodes in a graph such that each node is a true constraining node of its successor. The effective diameter D is defined as the longest length such a sequence can have in the graph.
\end{definition}
Thus, if (\ref{eijvi}) holds then the effective diameter is
\[ 1+\max_{i\in V}d_{i}. \]
We also define some important sets that are critical for our analysis.
\begin{definition}\label{dF}
	Define $ d_{ij} $ to be the minimum distance between nodes $ i $ and $ j. $ Further define $ \cF_k(m) $ to be the set of nodes whose minimum distance from node $ k $ is $ m $, i.e.,
	\[ \cF_k(m)=\{i ~|~ d_{ik}=m\}. \]
\end{definition}

If a graph $G=(V,E)$ has an effective diameter $D$ with a single source node $ 0, $ then as depicted in Figure \ref{fig:general:graph} there is a sequence of nodes, without loss of generality $i=0,1,\ldots D-1$, such that  node $i$ is the true constraining node of node $i+1$.  Henceforth we assume  the graph is as depicted in this figure. In fact, the following assumption holds.
\begin{figure}[htb]
	\centering
	\includegraphics[width=\linewidth]{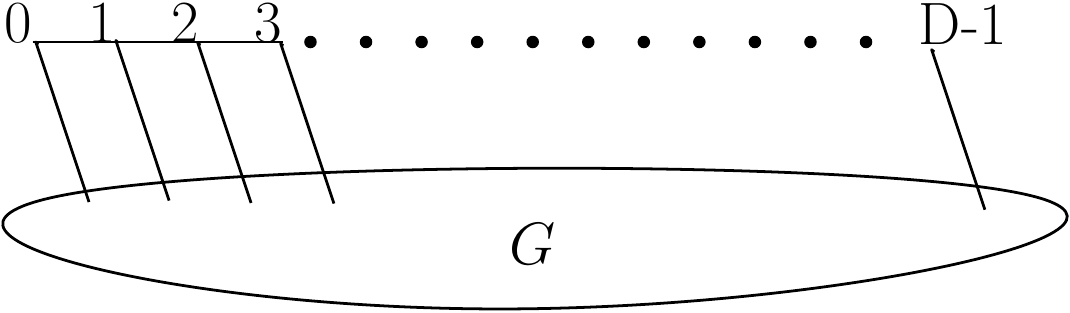}
	\caption{Arbitrary graph $G$ with a sequence highlighted.} \label{fig:general:graph}
\end{figure}
\begin{assumption}\label{ass:assumption1}
	The graph $ G=(V,E) $, is as in Figure \ref{fig:general:graph}. Further (\ref{eijvi}) holds and the shortest path from $ 0 $ to $ D-1 $ has $ D $ hops.
	The algorithm used is defined by (\ref{eq:distance:estimates}, \ref{eq:constraining:nodes:initial}, \ref{eq:improved:C}).
	 For $ t\leq 0 $,  $ D-1 $ is the source, and for $ t> 0 $, $ 0 $ is the source. Further $\dh_i(0)$ and $ a_i(0) $ are steady state values of the algorithm assuming $ D-1 $ is the source.  In particular,
	\begin{equation}\label{initdhat}
	\dh_i(0)=m, ~\forall~i\in \cF_{D-1}(m).
	\end{equation}
	Also, for every integer $ m $
	\begin{equation}\label{samelevel}
	\sum_{i \in  \cF_{D-1}(m)} a_i(0)\leq N.
	\end{equation}
	Further, the effective diameter when $ 0 $ is the only source is $ D $; when $ D-1 $ is the source, the effective diameter is $D_0$.
\end{assumption}
Observe that $ D_0\geq D. $ Also note that (\ref{samelevel}) is a trivial consequence of the steady values generated by (\ref{eq:accumulated:values}) and the fact that for all $ i $, $ v_i=1 $.

\section{Evolution of distance estimates}\label{sdist}
To understand the behavior of the accumulates after the source switch, one must first understand how distance estimates evolve subsequent to the switch.
 To this end we  have the following preparatory lemma describing the true distances of neighbors from the old and new source,
using Assumption \ref{ass:assumption1}, which ensures that all distances are hop lengths.

 \begin{lemma}\label{lneighbor}
 	Suppose Assumption \ref{ass:assumption1} holds, $ D>2 $, and $ \cF_k(m) $ are as in Definition \ref{dF}. (i) Then for all $ k\in V $,  $ 0< m\leq  D-1 $ and  $ i\in \cF_k(m) $  all neighbors of $ i $ are in $ \cF_k(m-1) \bigcup \cF_k(m) \bigcup\cF_0(m+1) $.  If  $ \cF_k(m)\neq \emptyset $,  $ i $ has at least one neighbor in $ \cF_0(m-1) $.
	 (ii) Moreover, $ 0 $ and $ D-1 $ are not neighbors.

 \end{lemma}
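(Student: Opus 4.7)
The plan is to prove part (i) via a triangle-inequality argument on the BFS layering around $k$, plus a shortest-path argument for the ``at least one neighbor'' claim, and to dispatch part (ii) directly from Assumption~\ref{ass:assumption1}.

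For the containment claim in (i), I would fix $k \in V$, $i \in \cF_k(m)$ with $0 < m \le D-1$, and an arbitrary $j \in \cN(i)$. Under Assumption~\ref{ass:assumption1} every edge has length $1$, so the triangle inequality forces $|d_{jk} - d_{ik}| \le 1$ and hence $d_{jk} \in \{m-1, m, m+1\}$. This places every neighbor of $i$ into $\cF_k(m-1) \cup \cF_k(m) \cup \cF_k(m+1)$. Read literally, the stated union requires the last piece to be $\cF_0(m+1)$ rather than $\cF_k(m+1)$; this coincides with my conclusion in the principal case $k = 0$ (the relevant one for the sequel, since $0$ is the new source for $t > 0$), but for general $k \in V$ the containment $\cF_k(m+1) \subseteq \cF_0(m+1)$ need not hold. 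I would therefore present the proof under the natural reading (subscript $k$ throughout) and flag that on a strictly literal reading the claim should be restricted to $k = 0$.

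For the ``at least one neighbor'' half, since $m \ge 1$ the node $i$ is not $k$. I would pick any geodesic from $i$ to $k$ and let $j$ be its first intermediate vertex: then $j \in \cN(i)$, and removing the first edge gives a walk of length $m - 1$ from $j$ to $k$, so $d_{jk} \le m - 1$; combined with $d_{jk} \ge m - 1$ from the containment half, $j \in \cF_k(m-1)$. The same caveat about $\cF_0(m-1)$ vs.\ $\cF_k(m-1)$ applies.

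Part (ii) is immediate: Assumption~\ref{ass:assumption1} asserts that the shortest path from $0$ to $D-1$ uses $D$ hops, so $d_{0,D-1} = D \ge 3 > 1$ under the hypothesis $D > 2$, hence $0$ and $D-1$ cannot share an edge. The only real obstacle is interpretive rather than mathematical: reconciling the subscripts $0$ appearing on $\cF_0(m\pm 1)$ with the outer quantifier ranging over arbitrary $k \in V$. Under the natural reading the proof is the two-line BFS-layering argument above; under the strict literal reading one obtains the stated conclusion for $k = 0$ by the same argument, and the sequel's use of the lemma can be checked against that restriction.
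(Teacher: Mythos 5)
Your proof is correct and follows essentially the same route as the paper's: the triangle-inequality argument on BFS layers is just the contrapositive form of the paper's contradiction argument, and your geodesic argument makes precise the paper's terse appeal to connectedness for the ``at least one neighbor'' claim. You are also right that the subscripts $0$ in $\cF_0(m+1)$ and $\cF_0(m-1)$ are typos for $\cF_k$ (a slip the paper's own proof shares, writing $\neq \cF_0(m)$ where $\neq \cF_k(m)$ is meant), and your reading is the one the sequel actually uses.
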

 \begin{proof}
 	Suppose for some $ 0< m< D-1 $ and $ \ell \notin \{m-1,m,m+1\} $, $ i\in \cF_k(m) $ has a neighbor in $ \cF_k(\ell) $. If $ \ell <m-1 $ then $ i\in \cF_k(\ell+1) \neq \cF_0(m),  $ establishing a contradiction. Similarly, if $ \ell >m+1 $ then $ i\in \cF_k(\ell-1) \neq \cF_0(m),  $ again establishing a contradiction.
 	Further, the fact that $ i $ must have at least one neighbor in each of the sets  $ \cF_j(m-1) $  follows from the fact that the graph is connected.  That all neighbors of $ 0 $ and $ D-1 $ are in $ \cF_0(1) $ and $ \cF_{D-1}(1) $, respectively, follows similarly. As $ D>1 $, $ 0 $ and $ D-1 $ cannot be neighbors.

 \end{proof}

 We now provide a lemma that describes the evolution of distance estimates for $ t>0. $ To this end, we partition $ V $ into three sets $\cI_i(t)$ for $i=0,1,2$. $ \cI_0(t)$ comprises nodes that are less than $ t $ hops away from $ 0 $. In particular,
 the distance estimates at these nodes have converged to  their true distances from $ 0 $. The set $ \cI_1(t)$ comprises nodes that are not in $ I_0(t) $  but have felt the effect of a change in source. The set $ \cI_2(t) $ consists of the remaining nodes, whose distance estimates and accumulates are identical to what they were at $ t=0. $
 \begin{lemma}\label{ldist}
 	Suppose Assumption \ref{ass:assumption1} holds and $ D>2 $.
 	For all $ t\geq 1 $ consider the following partitioning  of $ V $:
 	\begin{equation}\label{I0}
 		\cI_0(t)=\bigcup_{m=0}^{t-1}\cF_0(m),
 	\end{equation}
 	\begin{equation}\label{I1}
 		\cI_1(t)=\left (\bigcup_{m=0}^{t-1}\cF_{D-1}(m)\right )\setminus \cI_0(t)
 	\end{equation}
 	and
 	\begin{equation}\label{I2}
 		\cI_2(t)=V\setminus \{\cI_0(t)\cup \cI_1(t)\},
 	\end{equation}
 	of $ V $.
 	Then under (\ref{eq:distance:estimates},\ref{eq:constraining:nodes:initial}), the following hold.
 	\begin{equation}\label{nearest}
 		\dh_i (t)=m ~\forall~i\in \cF_0 (m)\mbox{ and }  0\leq m<t,
 	\end{equation}
 	\begin{equation}\label{middle}
 		\dh_i (t)=m ~\forall~i\in \cF_{D-1} (m)\bigcap \cI_2(t).
 	\end{equation}
 	and
 	\begin{equation}\label{furthest}
 		\dh_i(t)\in \{t,t+1\} ~\forall~i\in \cI_1(t).
 	\end{equation}

 \end{lemma}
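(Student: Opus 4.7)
The plan is to prove all three assertions simultaneously by induction on $t\ge 1$, since the update $\dh_i(t)=\min_{j\in\cN(i)}\dh_j(t-1)+1$ forces us to classify every neighbor according to which of $\cI_0(t-1),\cI_1(t-1),\cI_2(t-1)$ it lies in. The partition is monotone in a useful way: $\cI_0(t-1)\subseteq\cI_0(t)$ while $\cI_2(t)\subseteq\cI_2(t-1)$, so a node stays in $\cI_0$ once it enters and remains in $\cI_2$ until the true distance from either source catches up with $t$.

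For the base case $t=1$, I would use Assumption~\ref{ass:assumption1} to write $\dh_j(0)=d_{j,D-1}$. Then $\cI_0(1)=\{0\}$ and $\cI_1(1)=\{D-1\}$, reducing (\ref{nearest}) to the trivial $\dh_0(1)=0$. For $i\in\cF_{D-1}(m)\cap\cI_2(1)$ with $m\ge 1$, Lemma~\ref{lneighbor} applied with $k=D-1$ ensures $i$ has at least one neighbor in $\cF_{D-1}(m-1)$ and all others in $\cF_{D-1}(m-1)\cup\cF_{D-1}(m)\cup\cF_{D-1}(m+1)$, so $\min_j d_{j,D-1}=m-1$ and $\dh_i(1)=m$. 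For $i=D-1\in\cI_1(1)$, the same lemma forces every neighbor into $\cF_{D-1}(1)$, yielding $\dh_{D-1}(1)=2\in\{1,2\}$.

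For the inductive step I would treat (\ref{nearest}), (\ref{middle}) and then (\ref{furthest}). For (\ref{nearest}) with $i\in\cF_0(m)$, $m\ge 1$, Lemma~\ref{lneighbor} applied with $k=0$ supplies a neighbor $j\in\cF_0(m-1)\subseteq\cI_0(t-1)$ (since $m-1\le t-2$), whose distance estimate by induction equals $m-1$, so $\dh_i(t)\le m$; conversely every neighbor has $d_{j,0}\ge m-1$ and either lies in $\cI_0(t-1)$ with $\dh_j(t-1)=d_{j,0}\ge m-1$, or lies in $\cI_1(t-1)\cup\cI_2(t-1)$ with $\dh_j(t-1)\ge t-1\ge m$, giving the matching lower bound. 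For (\ref{middle}), the defining conditions of $\cI_2(t)$ force $m\ge t$ and $d_{j,0}\ge t-1$ for every neighbor $j$; combined with $d_{j,D-1}\ge m-1\ge t-1$, this puts every such $j$ into $\cI_2(t-1)\cap\cF_{D-1}(m')$ for some $m'\in\{m-1,m,m+1\}$, so induction gives $\dh_j(t-1)=m'$ with minimum $m-1$ and thus $\dh_i(t)=m$.

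The main obstacle is (\ref{furthest}) for $i\in\cI_1(t)$. The lower bound $\dh_i(t)\ge t$ is immediate because $d_{i,0}\ge t$ forces every neighbor out of $\cI_0(t-1)$, leaving $\dh_j(t-1)\ge t-1$ in both remaining sets. For the upper bound $\dh_i(t)\le t+1$, when $i\ne D-1$ I would take the neighbor $j$ on a shortest path from $i$ to $D-1$, so $d_{j,D-1}=d_{i,D-1}-1\le t-2$ and $d_{j,0}\ge t-1$ place $j\in\cI_1(t-1)$, whence $\dh_j(t-1)\le t$ by induction. The corner case $i=D-1$ needs separate verification: every neighbor sits in $\cF_{D-1}(1)$, and for $t=2$ these neighbors lie in $\cI_2(1)$ with $\dh_j(1)=1$, while for $3\le t\le D$ they lie in $\cI_1(t-1)$ with $\dh_j(t-1)\le t$, in each case giving $\dh_{D-1}(t)\le t+1$. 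Lemma~\ref{lneighbor}(ii) is essential here, as it excludes $0$ from the neighborhood of $D-1$ and prevents the anomalously small $\dh_0(t-1)=0$ from entering the minimum.
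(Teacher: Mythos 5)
Your proposal is correct and follows essentially the same route as the paper: a simultaneous induction on $t$ over the partition $\cI_0(t),\cI_1(t),\cI_2(t)$, with the base case from Assumption~\ref{ass:assumption1} and neighbor structure from Lemma~\ref{lneighbor}. The only differences are presentational — you classify neighbors directly by their true distances (where the paper argues by contradiction), and you treat the corner case $i=D-1$ in (\ref{furthest}) explicitly via its neighbors in $\cF_{D-1}(1)$, which the paper handles only implicitly — so no substantive comparison is needed.
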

 \begin{proof}

 	Use induction on $ t $ to prove (\ref{nearest})-(\ref{middle}). As $ 0 $ becomes the only source at $ t=0 $, $ \dh_0(1)=0 $ verifying (\ref{nearest}).
 	From Lemma \ref{lneighbor}, 0 is not a neighbor of $ D-1 $ and $ \cN(D-1)=\cF_{D-1}(1)$. Thus from Assumption \ref{ass:assumption1}, $ \dh_i(0)=1 $ for all $ i\in \cN(D-1) $. Thus
 	$ \dh_{D-1}(1)=2, $ verifying (\ref{furthest}). 

 	As  $ \cI_0(1)=\{0\}=\cF_0(0) $ and
 	$ \cI_1(1)=\{D-1\}=\cF_{D-1}(0) $,  $ \cI_2(1)=V\setminus \{0,D-1\}. $ Thus
 	\begin{equation}\label{key}
 		\cI_2(1)=\left  (\bigcup_{m=1}^{D_1-1}\cF_{D-1}(m)\right )\setminus \{0\}.
 	\end{equation}
 	From Assumption \ref{ass:assumption1}, $ \dh_i(0)=m $ for all $ i\in \cF_{D-1}(m) $.  Thus from (\ref{eq:distance:estimates},\ref{eq:constraining:nodes:initial}), (\ref{middle}) holds.

 	Now suppose the result holds for some $ t=T \geq 1. $
 	By the induction hypothesis for
 	\begin{equation}\label{2T-1nearest}
 		\dh_i(T)=m<T ~\forall~i\in \cF_0 (m)\mbox{ and }  1\leq m<T.
 	\end{equation}
 	Further
 	\begin{equation}\label{2T-1midd}
 		\dh_i(T)=m    ~\forall~ i\in \cF_{D-1} (m)\bigcap\cI_2(T).
 	\end{equation}
 	Lastly
 	\begin{equation}\label{2T-1furthest}
 		\dh_i(T)\in \{T,T+1\} ~\forall~  i\in  \cI_1(T).
 	\end{equation}

 	\vspace{2mm}
 	\noindent
 	{\bf Proof of (\ref{nearest}):}
 	Suppose now $$ i\in \cI_0(T+1)=\bigcup_{m=0}^T \cF_0(m), $$ in particular for some $ 0\leq m\leq T $, $ i\in \cF_0(m). $ From (i) of Lemma \ref{lneighbor},
 	\[ \cN(i)\subset \bigcup_{k=m-1}^{m+1}\cF_0(k) \]
 	with at least one $ j\in \cN(i)\bigcap \cF_0(m-1). $ From (\ref{2T-1nearest}), $ \dh_j(T)=m-1. $ Thus $ \dh_j(T+1)\leq m. $  To establish a contradiction, suppose for some other $ n\in \cN(i) $,
 	\begin{equation}\label{ncontra}
 		\dh_n(T)<m-1\leq T-1.
 	\end{equation}
 	From (\ref{2T-1furthest}),   $ n\notin \cI_1(T)  $.
 	Further, from (i) of Lemma \ref{lneighbor}, $ n\notin \cF_0(\ell) $, for $ \ell<m-1. $ Thus from (\ref{2T-1nearest}), $ n\notin \cI_0(T) $ and
 	$ n $ must be in $  \cI_2(T). $ From (\ref{middle})
 	\[ \dh_n(T)=d_{n,D-1}\geq T \]
 	violating (\ref{ncontra}). Thus $ \min_{k\in \cN(i)}\{\dh_k(T)\}=m-1 $ and from (\ref{eq:distance:estimates},\ref{eq:constraining:nodes:initial}), $ \dh_i(T+1)=m $, proving (\ref{nearest}).

 	\vspace{2mm}
 	\noindent
 	{\bf Proof of (\ref{furthest}):}
 	Now consider
 	\begin{equation}\label{indI1}
 		i\in \cI_1(T+1)=\left (\bigcup_{m=0}^{T}\cF_{D-1}(m)\right )\setminus \cI_0(T+1).
 	\end{equation}
 	To establish a contradiction, first assume that $ \dh_i(T+1)>T+2. $ This implies
 	\[ \min_{k\in \cN(i)}\{\dh_k(T)\}>T+1. \]
 	From (\ref{2T-1midd}), no neighbor of $ i $ is in $  \cI_1(T) $.
 	As from
 	(i) of Lemma \ref{lneighbor} and (\ref{indI1}), $ i $ must have at least one neighbor in
 	\[ \bigcup_{m=0}^{T-1} \cF_{D-1}(m), \]
 	Because of (\ref{furthest}) and (\ref{I2}) any
 	\[j\in \cN(i)   \bigcap \left \{\bigcup_{m=0}^{T-1} \cF_{D-1}(m)\right \}\]
 	must be in $ \cI_0(T) $ and additionally $ \dh_j(T)>T+1 $. This violates (\ref{nearest}), i.e., the upper bound of (\ref{furthest}) holds.

 	Thus, to prove (\ref{furthest}) for $ t=T+1 $ we need to show that
 	\begin{equation}\label{1minbd}
 		\dh_i(T+1)\geq T.
 	\end{equation}
 	To establish a contradiction suppose (\ref{1minbd}) is false. Then
 	\begin{equation}\label{thereexists}
 		\exists ~j\in\cN(i) \mbox{ such that }  \dh_j(T)<T.
 	\end{equation}
 	In view of (\ref{2T-1furthest}), $ j\notin \cI_1(T) $.
 	Thus suppose $ j\in\cI_0(T) $.  Then because of (\ref{nearest}), $ j\in \cF_0(\ell) $ for some $ \ell<T. $ As $ i\in \cN(j) $ this means $ i\in \cF_0(m) $ for some $ m\leq T. $ Thus from (\ref{I0}), $ i\in  \cI_0(T+1) $. From (\ref{I1}),  $ i\notin \cI_1(T+1)$, establishing a contradiction.

 	Thus $ j\in \cI_2(T). $ From (\ref{I2}), for some $ m\geq T $,
 	$ j\in \cF_{D-1}(m)\bigcap\cI_2(T). $ Thus from (\ref{2T-1midd}), $ \dh_j(T)\geq T $, violating (\ref{thereexists}). Thus (\ref{furthest}) holds for $ t=T+1. $

 	\vspace{2mm}
 	\noindent
 	{\bf Proof of (\ref{middle}):}
 	Finally consider $ i\in \cI_2(T+1) $, i.e., from
 	(\ref{I1}, \ref{I2})
 	\begin{equation}\label{2indI}
 		i\in \cF_{D-1}(m), \mbox{ for some } m>T.
 	\end{equation}
 	From (i) of Lemma \ref{lneighbor} there exists
 	\begin{equation}\label{jnew}
 		j\in \cN(i)\bigcap \cF_{D-1}(m-1).
 	\end{equation}
 	As $ m-1>T-1 $, from (\ref{I1})  $ j\notin \cI_1(T) $. Suppose $ j\in \cI_0(T) $, then from (\ref{2T-1nearest})
 	\[
 	\dh_j(T) \leq T-1<m-1. \]
 	If $ j\in \cI_2(T) $, then from (\ref{2T-1midd}) $ \dh_j(T)=m-1 $. Thus
 	$ \dh_i(T+1)\leq m. $

 	Thus the violation of  (\ref{middle}) implies
 	\begin{equation}\label{2contra}
 		\dh_j(T)<m.
 	\end{equation}
 	Then we assert that
 	\begin{equation}\label{notI0}
 		j\notin \cI_0(T).
 	\end{equation}
 	To establish a contradiction, suppose (\ref{2contra}) holds but (\ref{notI0}) does not.
 	Then from (\ref{2T-1nearest}),  $ j\in \cF_0(\ell) $ for some
 	$ \ell<T. $ As $ i\in\cN(j) $ from (i) of Lemma \ref{lneighbor}, $ i\in \cF_0(k) $, with $ k\leq T $, i.e., $ i\in \cI_0(T+1) $. This leads to a contradiction as $ \cI_1(T+1) $ and $ \cI_0(T+1) $ and $ i\in\cI_1(T+1) $  are disjoint.bThus (\ref{notI0}) holds. Consequently, $ j\in \cI_2(T) $ and $ \dh_j(T)=m-1 $ and
 	$ \dh_i(T+1)= m $, proving (\ref{middle}).
 \end{proof}

\section{Accumulates Under Monotonic Filtering} \label{sanalysis}
We now compute the partial accumulates under the new definition of children given in (\ref{eq:improved:C}), using the distance estimates characterized in Lemma \ref{ldist}, and the evolution of accumulates defined in  (\ref{eq:accumulated:values}). The added constraint in (\ref{eq:improved:C}), which intuitively ensures that data is collected by always descending distances, is satisfied by every node in a stable state; however, it may not be satisfied during transients. We will show that monotonic filtering suffices to  eliminate quadratic overestimates in a general graph.

We consider the partial accumulates in each of our partitioned sets individually. The following lemma characterizes the accumulates in $\cI_1(t)$.

\begin{lemma}\label{lem:a_i at I_1}
	Under Assumption \ref{ass:assumption1},  the partial accumulates in  $\cI_1(t)$ defined in (\ref{I1}), obey
	\begin{equation}\label{eq:lem4}
		a_i(t)=1 \text{  }\forall \text{  } i \in \cI_1(t)
	\end{equation}
\end{lemma}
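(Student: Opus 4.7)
The plan is to show $C_i(t)=\emptyset$ for every $i\in\cI_1(t)$, from which (\ref{eq:accumulated:values}) together with (\ref{eijvi}) immediately gives $a_i(t)=v_i=1$. The argument is direct rather than inductive, since Lemma \ref{ldist} already characterizes every distance estimate at time $t-1$, leaving only a clean case analysis.

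Fix $i\in\cI_1(t)$ and suppose toward contradiction that $j\in C_i(t)$. By (\ref{eq:improved:C}) such a $j$ must be a neighbor of $i$ and satisfy $\dh_j(t-1)=\dh_i(t)+1$. Lemma \ref{ldist} gives $\dh_i(t)\in\{t,t+1\}$, so the required value of $\dh_j(t-1)$ is at least $t+1$. On the other hand, since $i\in\cI_1(t)$, definition (\ref{I1}) places $i\in\cF_{D-1}(m)$ for some $m\leq t-1$, and Lemma \ref{lneighbor}(i) then forces $d_{j,D-1}\leq m+1\leq t$ for every neighbor $j$ of $i$.

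I would then split into the three mutually exclusive cases for $j$ at time $t-1$, applying Lemma \ref{ldist} (valid for $t\geq 2$). If $j\in\cI_0(t-1)$, then (\ref{nearest}) combined with (\ref{I0}) gives $\dh_j(t-1)=d_{j,0}\leq t-2$. If $j\in\cI_1(t-1)$, then (\ref{furthest}) gives $\dh_j(t-1)\leq t$. If $j\in\cI_2(t-1)$, then (\ref{middle}) gives $\dh_j(t-1)=d_{j,D-1}\leq t$ via the neighbor bound above. In every case $\dh_j(t-1)\leq t<t+1\leq\dh_i(t)+1$, contradicting (\ref{eq:improved:C}); hence $C_i(t)=\emptyset$. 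The boundary case $t=1$ must be handled separately because Lemma \ref{ldist} is only stated for $t\geq 1$: here $\cI_1(1)=\{D-1\}$, the neighbors of $D-1$ all lie in $\cF_{D-1}(1)$ by Lemma \ref{lneighbor}, and (\ref{initdhat}) together with the proof of Lemma \ref{ldist} gives $\dh_j(0)=1<3=\dh_{D-1}(1)+1$, again forcing $C_{D-1}(1)=\emptyset$.

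The main obstacle is essentially bookkeeping: verifying the partition $V=\cI_0(t-1)\cup\cI_1(t-1)\cup\cI_2(t-1)$ is exhaustive, ensuring the $\cI_2$ case really does obey the bound $\dh_j(t-1)\leq t$ via the Lemma \ref{lneighbor} neighbor constraint, and treating $t=1$ as a separate base case. Conceptually, the strictly positive gap between the required $\dh_j(t-1)\geq t+1$ and the available $\dh_j(t-1)\leq t$ closes the argument immediately, which is exactly the intuition that monotonic filtering blocks data transfer along edges whose distance estimates have not yet had time to grow.
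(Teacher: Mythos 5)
Your proposal is correct and follows essentially the same route as the paper's own proof: show $C_i(t)=\emptyset$ for every $i\in\cI_1(t)$ by combining the bound $\dh_i(t)\in\{t,t+1\}$ from (\ref{furthest}) with a three-way case analysis on whether a neighbor $j$ lies in $\cI_0(t-1)$, $\cI_1(t-1)$, or $\cI_2(t-1)$, using Lemma \ref{ldist} and the neighbor constraint from Lemma \ref{lneighbor} to cap $\dh_j(t-1)$ at $t$. Your explicit treatment of the boundary case $t=1$ is a small extra precaution the paper leaves implicit, but it does not change the argument.
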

\begin{proof}
	Our new definition of the set $C_i(t)$ given in (\ref{eq:improved:C}) requires that $C_i(t)$ accept a node $j\in \cN(i)$ only if $\hat{d}_j(t-1)=\hat{d}_i(t)+1$.
	The partial accumulates are given in (\ref{eq:accumulated:values}) with $v_i=1$ as
	\begin{equation}\label{eq:ais_improvedC}
		a_i(t)=\sum_{j \in C_i(t)} a_j(t-1)+1
	\end{equation}
	To prove (\ref{eq:lem4}), we need to show that $C_i(t) = \emptyset$ for all $i \in \cI_1(t)$. To prove this, it suffices to show that $\hat{d}_j(t-1)\neq \hat{d}_i(t)+1$ for all $j \in \cN(i).$

	From (\ref{furthest}) in Lemma \ref{ldist}, we know that the distance estimates for nodes in $i \in \cI_1(t)$ satisfy
	\begin{equation}\label{eq:lem4proof1}
		\hat{d}_i(t)+1 \in \{t+1,t+2\}
	\end{equation}
	Thus, if $\hat{d}_j(t-1) \notin \{t+1,t+2\}$, then $j \notin C_i(t)$. Consider cases depending on whether a neighbour $j \in \cN(i)$ is in sets $\cI_0(t-1)$, $\cI_1(t-1)$ and $\cI_2(t-1)$.

	\textbf{Case I $ j \in \cI_1(t-1) $:} Then $\hat{d}_j(t-1)\in \{t-1,t\} \notin \{t+1,t+2\}$ from (\ref{furthest}) in Lemma \ref{ldist}. Hence $j \notin C_i(t)$.

	\textbf{Case II $ j \in \cI_0(t-1) $:} Then $\hat{d}_j(t-1)= d_{0j} \notin \{t+1,t+2\}$ as $0 \leq d_{0j} <t-1$ from (\ref{nearest}) in Lemma \ref{ldist}. Hence $j \notin C_i(t)$.

	\textbf{Case III $ j \in \cI_2(t-1) $:} The nodes contained in the set $\cI_2(t-1)$ are given by (\ref{I2}), that is:
	\begin{equation}\label{eq:lem4c31}
		\cI_2(t-1)= V \setminus \{\cI_0(t-1) \cup \cI_1(t-1)\}
	\end{equation}
	The distance estimates for these nodes  from Lemma \ref{ldist} and (\ref{middle}) are as follows:
	\begin{equation}\label{eq:lem4case31}
		\hat{d}_j(t-1) = d_{D-1, j}.
	\end{equation}
	As $i \in \cI_1(t)$, then $d_{D-1, i} \le t-1$ by (\ref{I1}).  For $j$ to be a neighbor of a node in set $\cI_1(t)$, $d_{D-1, j} \leq  d_{D-1, i}+1 \leq t$ must hold. But $t  \notin \{t+1,t+2\}$ and hence $j \notin C_i(t)$ concluding the proof.
\end{proof}

To characterize the accumulates of nodes in $\cI_0(t)$, we need information on how accumulates evolve in the neighboring nodes that are part of the set $\cI_2(t)$. The following lemma gives an upper bound on that.

\begin{lemma}\label{lem:lem5:distI0nI2}
	Under Assumption \ref{ass:assumption1},   nodes $j \in C_i(t) \bigcap \cI_2(t)  $ for all $i \in \cI_0(t)$ obey:
	\begin{equation}\label{eq:lem5dist}
		\sum_{i \in \cI_0(t)} \sum_{j \in C_i(t) \bigcap \cI_2(t) } a_j(t-1) \leq N
	\end{equation}
\end{lemma}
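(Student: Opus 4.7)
The plan is to reduce the double sum to a single sum over a specific ``cut'' of the network and then bound this sum using the steady-state property (\ref{samelevel}) from Assumption \ref{ass:assumption1}. The central observation is that every $j$ contributing to the sum lies in a single level $\cF_{D-1}(t)$ of the distance hierarchy from the old source $D-1$.

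First, I would pin down which level these $j$'s occupy. For any $j \in C_i(t) \cap \cI_2(t)$ with $i \in \cI_0(t)$, the monotonic filter (\ref{eq:improved:C}) forces $\hat{d}_j(t-1) = \hat{d}_i(t) + 1$. Since the family $\cI_2(\cdot)$ is monotonically shrinking in $t$, we also have $j \in \cI_2(t-1)$, and Lemma \ref{ldist} then gives $\hat{d}_j(t-1) = d_{D-1,j}$ and $\hat{d}_i(t) = d_{0,i}$, so $d_{D-1,j} = d_{0,i}+1$. Combining with the membership bounds $d_{D-1,j} \ge t$ (from $j \in \cI_2(t)$) and $d_{0,i} \le t-1$ (from $i \in \cI_0(t)$) pins down $d_{0,i}=t-1$ and $d_{D-1,j}=t$. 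Hence $J := \bigcup_{i \in \cI_0(t)} (C_i(t) \cap \cI_2(t)) \subseteq \cF_{D-1}(t)$. Because each node has a unique constraining node, the sets $\{C_i(t)\}_i$ are pairwise disjoint, so the double sum collapses to $\sum_{j \in J} a_j(t-1)$.

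Next I would establish $a_j(\tau) \le a_j(0)$ for every $j \in \cI_2(\tau)$ by strong induction on $\tau \ge 1$. For $k \in C_j(\tau)$ the filter gives $\hat{d}_k(\tau-1) = d_{D-1,j}+1 \ge \tau+1$, while Lemma \ref{ldist} caps estimates in $\cI_0(\tau-1)$ at $\tau-2$ and in $\cI_1(\tau-1)$ at $\tau$, forcing $k \in \cI_2(\tau-1)$ with $d_{D-1,k} = d_{D-1,j}+1$. The triangle inequality on $d_0$ and $d_{D-1}$ further forces every neighbor of such $k$ to lie in $\cI_2(\tau-2)$, so their distance estimates at time $\tau-2$ equal their steady-state values; consequently $c_k(\tau-1)$ coincides with the steady-state constraining node of $k$, placing $k$ in $C_j(0)$. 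Applying the inductive hypothesis at each such $k$ yields
\[ a_j(\tau) = \sum_{k \in C_j(\tau)} a_k(\tau-1) + 1 \le \sum_{k \in C_j(0)} a_k(0) + 1 = a_j(0). \]
The base case $\tau=1$ is immediate: at $t=0$ the system is in the pre-switch steady state, so $C_j(1) = C_j(0)$ and $a_j(1) = a_j(0)$.

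Chaining these bounds gives
\[ \sum_{i \in \cI_0(t)} \sum_{j \in C_i(t) \cap \cI_2(t)} a_j(t-1) = \sum_{j \in J} a_j(t-1) \le \sum_{j \in J} a_j(0) \le \sum_{j \in \cF_{D-1}(t)} a_j(0) \le N, \]
where the last inequality is (\ref{samelevel}) with $m=t$. The main obstacle is the inductive bound $a_j(\tau) \le a_j(0)$: it must rule out the possibility that monotonic filtering admits a ``contaminated'' child of $j$ whose accumulate has swelled due to the source switch. I resolve this by showing that $\cI_2(\tau-1)$ is sealed off by a one-step buffer from $\cI_0(\tau-2) \cup \cI_1(\tau-2)$, so children selected at time $\tau$ can only come from $j$'s original steady-state subtree.
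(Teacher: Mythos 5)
Your proof is correct, and its core reduction is the same as the paper's: show that every $j$ contributing to the double sum lies in the single level $\cF_{D-1}(t)$ (via the filter condition $\hat{d}_j(t-1)=\hat{d}_i(t)+1$ together with Lemma \ref{ldist}), note the $C_i(t)$ are disjoint, and invoke (\ref{samelevel}). Where you genuinely diverge is in how the final step is justified: the paper simply asserts that nodes in $\cI_2$ still carry their $t=0$ steady-state accumulates (this is stated informally before Lemma \ref{ldist} and never proved), whereas you establish the sufficient inequality $a_j(\tau)\leq a_j(0)$ for $j\in\cI_2(\tau)$ by induction, showing that any filtered child $k\in C_j(\tau)$ must have $\hat{d}_k(\tau-1)\geq\tau+1$, hence lies in $\cI_2(\tau-1)$ and coincides with a steady-state child of $j$, so contamination from the switched source cannot leak into $j$'s accumulate. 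This buys a more self-contained and rigorous argument (and correctly weakens equality to an inequality, since $j$ may lose children during the transient), at the cost of extra length. Two small points to tidy: for $\tau=2$ (and for $t=1$ in the final chain) the appeal to Lemma \ref{ldist} at time $\tau-2=0$ should be replaced by the observation that time $0$ is the pre-switch steady state, where the needed equalities hold by Assumption \ref{ass:assumption1}; and the identification $c_k(\tau-1)=c_k(0)$ tacitly assumes the argmin in (\ref{eq:constraining:nodes:initial}) breaks ties deterministically, the same convention the paper itself relies on.
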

\begin{proof}
	Suppose $ i\in \cI_0(t) $ has a neighbor  $ j\in \cI_2(t). $  Then   $ d_{j,0}\geq t. $ Thus from Lemma \ref{lneighbor}, $ d_{i,0}=t-1 $. From Lemma \ref{ldist}, $ \dh_i(t)=t-1 $. Thus,
	\[ \dh_j(t)=t-1+1=t, \forall i \in \cI_0(t) \mbox{ and }  j \in C_i(t) \bigcap \cI_2(t). \]
	Thus from Lemma \ref{ldist} all such $ j $ have the same distance from $ D-1. $ As accumulates at these $ j $ are the value they had at the steady state at $ t=0 $, the result follows from (\ref{samelevel}).
\end{proof}

Given the information regarding neighbors of $\cI_0(t)$ in other sets, we can now characterize the accumulates in $\cI_0(t)$ according to the following lemma.

\begin{lemma}\label{lem:a_i at I_0}
	Under Assumption \ref{ass:assumption1}, the partial accumulates at  nodes in the set $\cI_0(t)$ obey
	\begin{equation}\label{eq:lem5}
		a_i(t) \leq 2N  \text{  }\forall  \text{  } i \in \cI_0(t)
	\end{equation}
\end{lemma}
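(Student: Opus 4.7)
The plan is to expand $a_i(t)$ via (\ref{eq:ais_improvedC}) and split the children in $C_i(t)$ according to whether each $j$ lies in $\cI_0(t-1)$, $\cI_1(t-1)$, or $\cI_2(t-1)$, and then use Lemmas~\ref{lem:a_i at I_1} and~\ref{lem:lem5:distI0nI2} to control the $\cI_1$ and $\cI_2$ parts while handling the remaining $\cI_0$ part by induction on $t$.

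For the $\cI_1(t-1)$ children, Lemma~\ref{lem:a_i at I_1} at time $t-1$ gives $a_j(t-1)=1$, and the monotonic-filter requirement $\dh_j(t-1)=\dh_i(t)+1$ combined with (\ref{furthest}) restricts these children to at most $N-1$ nodes. For the $\cI_2(t-1)$ children, a short backward induction on $s\leq t-1$ (mirroring the steady-state identity already used in Assumption~\ref{ass:assumption1}) shows that $j$ has remained in $\cI_2(s)$ throughout and that $a_j(t-1)=a_j(0)$; the filter then forces every such $j$ into the single layer $\cF_{D-1}(\dh_i(t)+1)$, so (\ref{samelevel}) bounds their aggregate contribution by $N$.

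The $\cI_0(t-1)$ part is then handled by splitting on $m:=\dh_i(t)$. Using Lemma~\ref{ldist} one sees that when $m\in\{t-2,t-1\}$ no $\cI_0(t-1)$ children are admissible at all, so the bound $a_i(t)\leq 1+(N-1)+N=2N$ follows immediately from the two contributions above. When $m\leq t-3$, admissible children lie entirely in $\cF_0(m+1)\cap\cI_0(t-1)$, and one invokes the inductive hypothesis $a_j(t-1)\leq 2N$ for each child.

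The main obstacle will be closing this last case, $m\leq t-3$: naively summing $a_j(t-1)\leq 2N$ across several children overshoots the target. The aggregate argument required must use the unique-parent property of the monotonic filter to view the total contribution from the unrolled subtree below $(i,t)$ as a stable part, bounded by $N$ because the distinct $\cF_0$-levels below $i$ are traversed disjointly, plus a transient part aggregated from the $\cI_1$ and $\cI_2$ leaves deeper in the subtree, which Lemma~\ref{lem:lem5:distI0nI2} (applied one step down) together with the single-$\cF_{D-1}$-layer structure of those leaves will also bound by $N$, yielding $a_i(t)\leq 2N$.
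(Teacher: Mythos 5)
Your setup is on the right track, and you correctly diagnose the central difficulty: a per-node induction with hypothesis $a_j(t-1)\leq 2N$ cannot close the case $\dh_i(t)=m\leq t-3$, because summing $2N$ over several children overshoots. But the replacement you sketch does not actually close it, and the accounting in the sketch is wrong at the decisive point. In your unrolled subtree below $(i,t)$, three kinds of contributions arise: the $+1$ of each $\cI_0$ space--time node (your ``stable part''), the accumulates of $\cI_2$ leaves, and the accumulates (each equal to $1$, by Lemma~\ref{lem:a_i at I_1}) of $\cI_1$ leaves. You propose to bound the $\cI_1$ and $\cI_2$ leaves \emph{jointly} by $N$ using Lemma~\ref{lem:lem5:distI0nI2} and the single-$\cF_{D-1}$-layer property, but those tools only control the $\cI_2$ steady-state accumulates via (\ref{samelevel}); they say nothing about how many $\cI_1$ leaves there are, and a priori there can be on the order of $N$ of them. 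With your split you would get (stable $\leq N$) $+$ ($\cI_2$ leaves $\leq N$) $+$ ($\cI_1$ leaves $\leq$ ?), i.e.\ potentially $3N$ or worse unless the $\cI_1$ count is charged elsewhere. The correct bookkeeping is to charge the $\cI_1$ contributions against the $\cF_0$-levels \emph{not} occupied by the structural $+1$'s: since $\cI_1(s)$ excludes $\cI_0(s)$, those nodes sit at $\cF_0$-distances beyond the level currently being summed, which is exactly how the paper absorbs them.

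The paper avoids the subtree unrolling altogether by strengthening the statement to a per-level invariant, $\sum_{i\in\cF_0(x)}a_i(t)\leq N+\sum_{y=x}^{D-1}|\cF_0(y)|$ (equation (\ref{eq:lem6prop})), proved by induction: for $x\in\{t-1,t-2\}$ the children can only come from $\cI_1(t-1)$ (bounded by $\sum_{y=x+1}^{D-1}|\cF_0(y)|$, since such nodes lie outside $\cI_0(t-1)$) and from $\cI_2(t-1)$ (bounded by $N$ via Lemma~\ref{lem:lem5:distI0nI2}); for smaller $x$ the unique-parent property collapses the double sum to $\sum_{j\in\cF_0(x+1)}a_j(t-1)$, to which the invariant at the previous time applies, and each level then adds its own $|\cF_0(x)|$. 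This aggregation over a whole $\cF_0$-level is the same idea you are reaching for with the unrolled tree, but it performs the $\cI_1$-versus-$\cF_0$-level bookkeeping automatically and yields $2N$ since $\sum_{y=x}^{D-1}|\cF_0(y)|\leq N$. To salvage your version you would need to prove, not merely assert, that the $\cI_0$ space--time nodes together with the $\cI_1$ leaves of the subtree can be injected into $V$ (or charged to disjoint $\cF_0$-levels), and handle the possibility of a node appearing at two different times in the unrolled tree; as written, that step is the missing piece.
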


\begin{proof}

	The result holds for all $ t>0 $, if the sum of all partial accumulates at a given distance $ x < t$ from new source node $0$ is bounded by the following,
	\begin{equation}\label{eq:lem6prop}
		\sum_{i \in \cF_0(x)} a_i(t) \leq N + \sum_{y=x}^{D-1} |\cF_0(y)|.
	\end{equation}

	We  prove (\ref{eq:lem6prop}) by induction on  decreasing $x$.


	To initiate the induction, for any  consider $x = t-1$ and $ x=t-2 $ in the latter case provided $ t\geq 2. $
	We know from (\ref{eq:accumulated:values}) that
	\begin{equation}\label{eq:lem6indhyp1}
		a_i(t)= \sum_{j \in C_i(t)} a_j(t-1)+1
	\end{equation}
	Then taking a sum on the left hand side of (\ref{eq:lem6prop}), we have
	\begin{align}
		\sum_{i \in \cF_0(x)}a_i(t)=  \sum_{i \in \cF_0(x)}\left  ( \sum_{j \in C_i(t)} a_j(t-1)+1 \right )\label{eq:lem6indhypess}\\
		=  \sum_{i \in \cF_0(x)} \sum_{j \in C_i(t)} a_j(t-1)+|\cF_0(x)|\label{eq:lem6indhypess1}
	\end{align}
	Since $x = t-1$ or $t-2$,  $j \in C_i(t)$ cannot be in $\cI_0(t-1)$. If $j \in C_i(t)$ are in  $\cI_2(t-1)$ then from Lemma \ref{lem:lem5:distI0nI2}
	\begin{align*}
		\sum_{i \in \cF_0(x)} \sum_{j \in C_i(t) \bigcap \cI_2(t-1)} a_j(t-1) \\
		\leq \sum_{i \in \cI_0(t)} \sum_{j \in C_i(t) \bigcap \cI_2(t-1) } a_j(t-1) \leq N
	\end{align*}
	The contribution from $\cI_1(t-1)$ excludes all nodes from $\cI_0(t-1)$, and everything that is a distance $t-1$ or less away from source node $0$ is in $\cI_0(t-1)$. It follows that the total contribution from nodes in $\cI_1(t-1)$ must be
	\begin{equation}\label{eq:lem6:jinI1}
		\sum_{i \in \cF_0(x)} \sum_{j \in C_i(t) \bigcap \cI_1(t-1)} \!\! a_j(t-1) \leq \sum_{y=x+1}^{D-1} |\cF_0(y)|.
	\end{equation}
	Thus (\ref{eq:lem6indhypess1}) simplifies to
	\begin{align*}\label{eq:lem6indhypess121}\numberthis
		\sum_{i \in \cF_0(x)}a_i(t) \leq N+\sum_{y=x+1}^{D-1} |\cF_0(y)| + |\cF_0(x)|  \\
		\leq N+\sum_{y=x}^{D-1} |\cF_0(y)|.
	\end{align*}

	Now suppose (\ref{eq:lem6prop}) holds for all $ t \geq 3 $.
	For all other $x<t-3$ for any $t$, the set $C_i(t) \subseteq \cF_0(x+1) \subseteq \cI_0(t-1)$). The double sum in (\ref{eq:lem6indhypess1}) simplifies to
	\begin{equation}\label{eq:lem6simp}
		\sum_{i \in \cF_0(x)} \sum_{j \in C_i(t)} a_j(t-1) = \sum_{j \in \cF_0(x+1)} a_j(t-1)
	\end{equation}
	The inductive hypothesis in (\ref{eq:lem6prop}) can be applied directly to (\ref{eq:lem6simp}) in (\ref{eq:lem6indhypess1}) as
	\begin{align*}\label{eq:lem6:t2}\numberthis
		\sum_{i \in \cF_0(x)}a_i(t)=  \sum_{j \in \cF_0(x+1)} a_j(t-1)+|\cF_0(x)| \\
		 \leq N + \sum_{y=x+1}^{D-1} |\cF_0(y)| +|\cF_0(x)|.
	\end{align*}

	Then the following holds
	\begin{equation}\label{eq:lem6inhypcase322ineq}
		\sum_{i \in  \cF_0(x)}a_i(t) \leq N + \sum_{y=x}^{D-1} |\cF_0(y)|
	\end{equation}
	completing the proof of (\ref{eq:lem6prop}).
\end{proof}

We now have the main result.

\begin{theorem}\label{the:1}
	In a general graph, with monotonic filtering the partial accumulations $a_i(t)$ in sets $\cI_0(t)$, $\cI_1(t)$ and $\cI_2(t)$ for any time $t$, have an upper bound that is given as:
	\begin{equation}\label{eq:th:11}
		a_i(t) \leq 2N \text{ } \forall  \text{ } i \in \cI_0(t)
	\end{equation}
	\begin{equation}\label{eq:th:12}
		a_i(t) =1  \text{ } \forall  \text{ } i \in \cI_1(t)
	\end{equation}
	and
	\begin{equation}\label{eq:th:13}
		a_i(t) \leq N \text{ } \forall  \text{ } i \in \cI_2(t)
	\end{equation}

\end{theorem}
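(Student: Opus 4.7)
The plan is to assemble the three bounds from the preceding lemmas in this section, with only (\ref{eq:th:13}) requiring genuinely new work. Bound (\ref{eq:th:12}) is a direct restatement of equation (\ref{eq:lem4}) in Lemma \ref{lem:a_i at I_1}. Bound (\ref{eq:th:11}) follows from Lemma \ref{lem:a_i at I_0}: the per-level inequality (\ref{eq:lem6prop}) established in its proof gives $\sum_{i \in \cF_0(x)} a_i(t) \leq N + \sum_{y=x}^{D-1}|\cF_0(y)| \leq 2N$, which in particular bounds each individual $a_i(t)$ with $i \in \cF_0(x) \subseteq \cI_0(t)$ by $2N$.

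For bound (\ref{eq:th:13}), I would prove the stronger claim that $a_i(t) = a_i(0)$ for every $i \in \cI_2(t)$, by induction on $t$. The theorem then follows at once from (\ref{samelevel}) of Assumption \ref{ass:assumption1}, since $a_i(0) \leq \sum_{j \in \cF_{D-1}(m)} a_j(0) \leq N$ for $m = d_{i,D-1}$. The base case $t=1$ reduces to verifying $C_i(1) = C_i(0)$, which is immediate from (\ref{initdhat}) together with the steady-state identity $a_i(0) = \sum_{j \in C_i(0)} a_j(0) + 1$. For the inductive step ($t \geq 2$), observe that $i \in \cI_2(t)$ entails both $d_{i, D-1} \geq t$ and $d_{i, 0} \geq t$. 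Any $j \in C_i(0)$ has $d_{j, D-1} = d_{i, D-1} + 1$, and Lemma \ref{lneighbor} gives $d_{j, 0} \geq d_{i, 0} - 1 \geq t-1$; hence $j \in \cI_2(t-1)$, and Lemma \ref{ldist} yields $\dh_j(t-1) = \dh_i(t) + 1$, matching the monotonic filtering condition in (\ref{eq:improved:C}). Repeating the same distance-bound argument one more step back for each neighbor of $j$ at time $t-2$ shows every such neighbor still lies in $\cI_2(t-2)$ with unchanged distance estimate, so $c_j(t-1) = c_j(0) = i$. The two checks together give $C_i(t) = C_i(0)$; applying (\ref{eq:accumulated:values}) with the inductive hypothesis $a_j(t-1) = a_j(0)$ closes the step.

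The main obstacle is the careful bookkeeping required to propagate $\cI_2$ membership one step (and occasionally two steps) into the past without drifting into $\cI_0$ or $\cI_1$. The clean tool that makes this work is the \emph{simultaneous} lower bound $d_{i, D-1} \geq t$ and $d_{i, 0} \geq t$ enjoyed by every $i \in \cI_2(t)$, which combined with the $|d_{j,k} - d_{i,k}| \leq 1$ property from Lemma \ref{lneighbor} buys one hop of slack in both radial directions per time step. Once this is established, each of the three theorem bounds is essentially a one-line invocation of an earlier lemma.
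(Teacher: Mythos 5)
Your proposal is correct and follows essentially the same route as the paper: (\ref{eq:th:11}) and (\ref{eq:th:12}) are direct invocations of Lemmas \ref{lem:a_i at I_0} and \ref{lem:a_i at I_1}, and (\ref{eq:th:13}) rests on the observation that nodes in $\cI_2(t)$ are still running the old steady state, whose accumulates are bounded by $N$ via (\ref{samelevel}). The only difference is that the paper dismisses this last point as trivial, whereas you make it rigorous with an induction showing $a_i(t)=a_i(0)$ on $\cI_2(t)$ (using that every neighbor of an $\cI_2(t)$ node lies in $\cI_2(t-1)$, and its neighbors in $\cI_2(t-2)$, so both the filtering condition and the constraining-node choice reproduce their steady-state values) --- a welcome filling-in of detail rather than a departure in method.
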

\begin{proof}
	The first two expressions follow directly from lemma \ref{lem:a_i at I_0} and \ref{lem:a_i at I_1} respectively.

	The proof of (\ref{eq:th:13}) is trivial as nodes in the set $\cI_2(t)$ have not been affected yet by the source switch. They maintain partial accumulate values attained at the previous steady state, and these values never exceed $N$, the total number of nodes in the graph.
\end{proof}

\section{Simulation}\label{ssim}

In order to evaluate the performance of monotonic filtering in a more general setting, we simulated a network of 100 to 1000 nodes, randomly displaced in a square, with a connection range such that the average number of neighbours per node is about 10. We performed 1000 simulations and averaged the results. The simulation is publicly available online.\footnote{https://github.com/Harniver/monotonic-filtering-dynamics}

\begin{figure}[htbp]
	\centering
	\includegraphics[width=\linewidth]{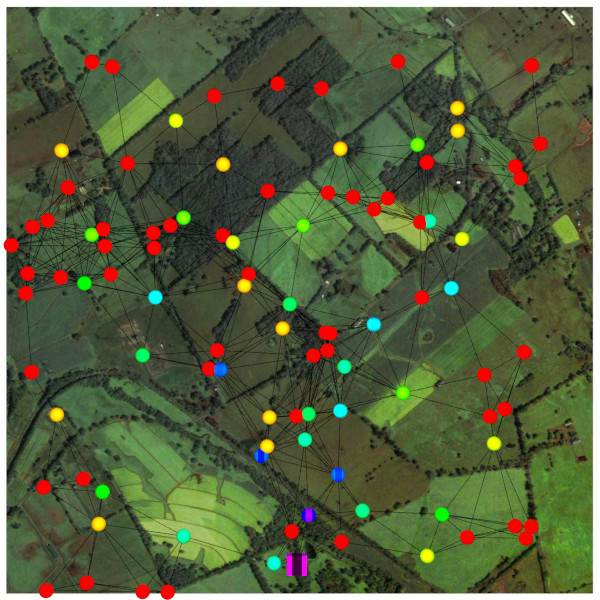}
	\caption{Simulation of collection algorithms on a random bidimensional arrangement. Colors of nodes are tuned from red (estimate 1) to magenta (correct estimate) to black (infinitely large estimate); and the central color corresponds to the basic approach while the color of the sides corresponds to monotonic filtering. The screenshot is taken during a transient recovery, in which the source (big square) has a very large overestimate with the basic approach (black central color) while it is almost correct with monotonic filtering (magenta sides).}
	\label{fig:screenshot}
\end{figure}
\begin{figure*}[h]tb
	\centering
	\includegraphics[width=0.49\linewidth]{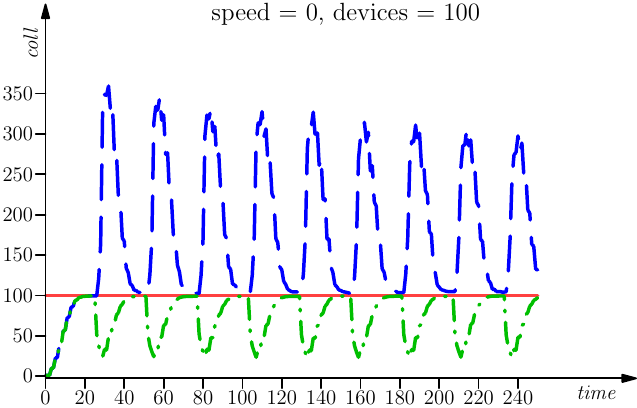}
	\includegraphics[width=0.49\linewidth]{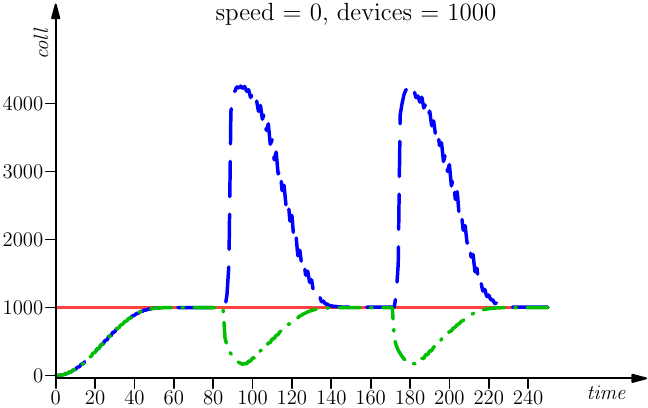}
	\\[5pt]
	\includegraphics[width=0.49\linewidth]{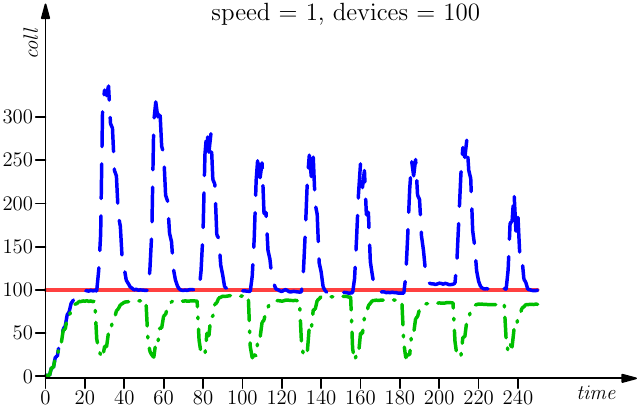}
	\includegraphics[width=0.49\linewidth]{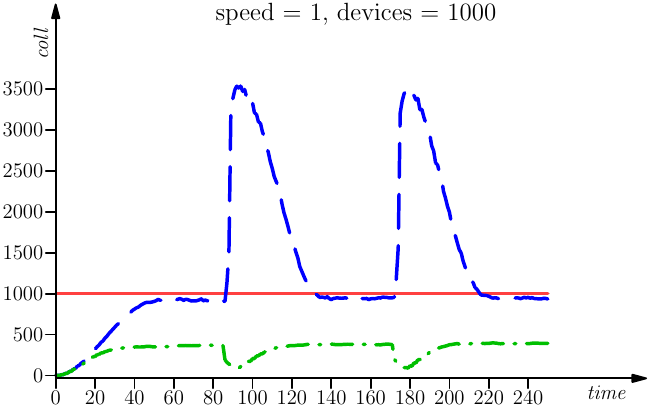}
	\\[5pt]
	\includegraphics[width=0.49\linewidth]{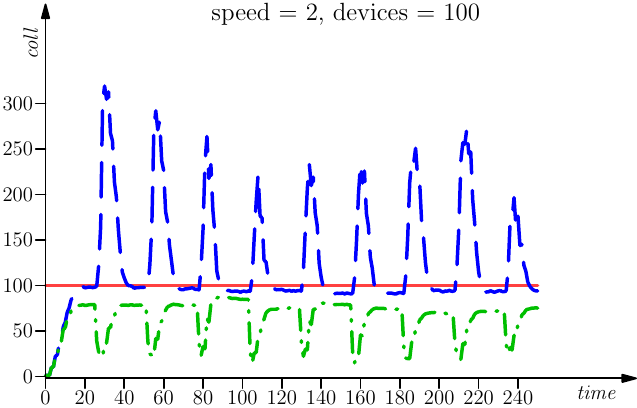}
	\includegraphics[width=0.49\linewidth]{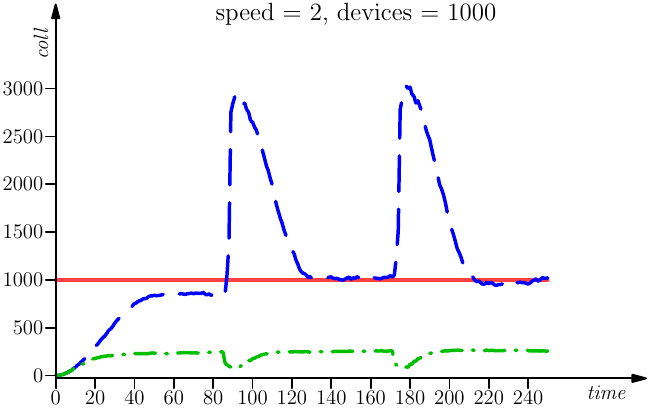}
	\\[5pt]
	\includegraphics[scale=1.5]{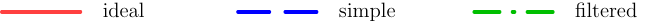}
	\caption{Average collection results (counting device number) in source nodes over simulated time, where the source is periodically switching to different devices (at times matching the periodic spikes). The lines correspond to the number of devices (\emph{ideal}, in red), to the basic approach (\emph{simple}, in blue), and to monotonic filtering (\emph{filtered}, in green). The basic approach suffers from significant overestimates at every source change, while monotonic filtering provides underestimates instead. In mobile networks (\emph{speed} greater than zero) with a sufficient number of devices, the underestimates given by monotonic filtering become systematic, while the basic approach is still able to reach a reasonably correct value.}
	\label{fig:plots}
\end{figure*}

A sample screenshot of the simulation is depicted in Figure \ref{fig:screenshot}, while synthetic plots of the average collection results in source nodes are given in Figure \ref{fig:plots}.
When the position of the nodes is fixed (\emph{speed} is zero), monotonic filtering can prevent any overestimate from occurring.
The estimate still converges to the correct result as fast as with the basic approach, which instead suffers from high peaks during reconfiguration. When nodes are steadily moving (\emph{speed} is positive), both algorithms start underestimating the true count as the speed increases and number of devices increase. However, the underestimates are much more pronounced with monotonic filtering, while the basic approach is able to tolerate small speeds.

\section{Conclusion}\label{sconc}

In this paper, we investigated the effect of a monotonic filtering condition on the transient values of a single-path collection algorithm during recovery from a source switch. In a static graph, the monotonic filtering condition is proved to bound overestimates to at most $2N$, while single-path collection without it is showed to reach quadratic overestimates in some cases. By evaluating the algorithms in simulation, we show that in practice transient large overestimates do occur without filtering, while no overestimate at all is present with monotonic filtering. Finally, we also simulate the behavior of the algorithms under persistent perturbations, i.e., steady movement of the nodes. In this scenario, both algorithms degrade their quality towards underestimates; however, with monotonic filtering the degradation occurs much sooner, with lower movement speeds and lower number of devices.

In future work, monotonic filtering should be compared with existing strategies \cite{abdv:list:collection}, in its ability to avoid overestimates under persistent perturbations.

\bibliographystyle{IEEEtran}
\bibliography{IEEEabrv,bibliograpy}
\end{document}